%%%%%%%%%%%%%%%%%%%%%%%%%%%%%%%%%%%%%%%%%%%%%%%%%%%%%%%%%%%%%%%%%%%%%%%%%%%%%%%%
%2345678901234567890123456789012345678901234567890123456789012345678901234567890
%        1         2         3         4         5         6         7         8

\documentclass[letterpaper, 10 pt, conference]{ieeeconf}  % Comment this line out if you need a4paper

\IEEEoverridecommandlockouts                              % This command is only needed if 
                                                          % you want to use the \thanks command

\overrideIEEEmargins                                      % Needed to meet printer requirements.

%In case you encounter the following error:
%Error 1010 The PDF file may be corrupt (unable to open PDF file) OR
%Error 1000 An error occurred while parsing a contents stream. Unable to analyze the PDF file.
%This is a known problem with pdfLaTeX conversion filter. The file cannot be opened with acrobat reader
%Please use one of the alternatives below to circumvent this error by uncommenting one or the other
%\pdfobjcompresslevel=0
%\pdfminorversion=4

% See the \addtolength command later in the file to balance the column lengths
% on the last page of the document

% The following packages can be found on http:\\www.ctan.org
%\usepackage{graphics} % for pdf, bitmapped graphics files
%\usepackage{epsfig} % for postscript graphics files
%\usepackage{mathptmx} % assumes new font selection scheme installed
%\usepackage{times} % assumes new font selection scheme installed
\usepackage{amsmath} % assumes amsmath package installed
\usepackage{graphicx}
\usepackage{float}
\usepackage{cite}
\usepackage{pgfplots}
\usepackage{pgfplotstable}
\usepackage{booktabs} % For better table formatting
\usepackage{multirow} % For multirow cells
\usepackage{amssymb}
\usepackage[acronym]{glossaries}
\usepackage{bm}

\usepackage[section]{placeins}

\newtheorem{theorem}{Theorem}
\newtheorem{definition}{Definition}
\newtheorem{problem}{Problem}
\newtheorem{proposition}{Proposition}

\usepackage{array} % put this in your preamble

\DeclareMathOperator{\R}{\mathbb R}
\DeclareMathOperator*{\argmin}{arg\,min}

\newacronym{cbf}{CBF}{control barrier function}

\title{\LARGE \bf
Belief Space Control of Safety-Critical Systems Under State-Dependent Measurement Noise
}

% We can change the order as required
\author{Rohan Walia, Mitchell Black, Andrew Schoer, Kevin Leahy% <-this % stops a space
\thanks{DISTRIBUTION STATEMENT A. Approved for public release. Distribution is unlimited.}
\thanks{R. Walia and K. Leahy are with the Robotics Engineering Department, Worcester Polytechnic Institute, Worcester, MA, USA.
        {\tt\small \{rwalia,kleahy\} @wpi.edu}}%
\thanks{M. Black and A. Schoer are with MIT Lincoln Laboratory, Lexington, MA, USA.
        {\tt\small \{mitchell.black,andrew.schoer\} @ll.mit.edu}}%
}

\begin{document}

\maketitle
\thispagestyle{empty}
\pagestyle{empty}

\begin{abstract}
Safety-critical control is imperative for deploying autonomous systems in the real world. Control Barrier Functions (CBFs) offer strong safety guarantees when accurate system and sensor models are available. However, widely used additive, fixed-noise models are not representative of complex sensor modalities with state-dependent error characteristics. Although CBFs have been designed to mitigate uncertainty using fixed worst-case bounds on measurement noise, this approach can lead to overly-conservative control. To solve this problem, we extend the Belief Control Barrier Function (BCBF) framework to accommodate state-dependent measurement noise via the Generalized Extended Kalman Filter (GEKF) algorithm, which models measurement noise as a linear function of the state. Using the original BCBF framework as baseline, we demonstrate the performance of the BCBF-GEKF approach through simulation results on a 1D single integrator setpoint tracking scenario and 2D unicycle kinematics trajectory tracking scenario. Our results confirm that the BCBF-GEKF approach offers less conservative control with greater safety.
\end{abstract}

%---------------------------------------------------------------------------------------------

\section{Introduction}
\label{sec:intro}

Safe control is a critical requirement for the deployment of autonomous systems in the real world. Control barrier functions (CBFs) are being increasingly adopted as a minimally invasive and flexible approach to enforce safety constraints for control applications~\cite{cbf_theory_and_applications}. CBFs have been successfully deployed across a wide range of systems and scenarios, such as adaptive cruise control~\cite{CBF_adaptive_cruise_control}, geofencing for fixed-wing aircraft~\cite{cbf_geofencing}, and docking of spacecraft~\cite{cbf_spacecraft_docking}.

CBFs provide strong safety assurances when a complete, accurate system model is available. 
In practice, however, sensor noise or disturbances introduce uncertainty about the true state of the system. Recent work has focused on developing techniques for applying CBFs in stochastic settings~\cite{FxTS_RaCBF, RCBF_uncertainty_bound, agrawal2022safe, garg2024advances} and hybrid systems \cite{cbf_hybrid_systems}. 

One increasingly prevalent sensing modality in robotic and autonomous systems is learning-based vision systems, which belong to the broader class of learning-enabled components (LECs). LECs exhibit complex error characteristics, which largely depend on the underlying training distribution. 
Deviations from this distribution, called \emph{distributional shifts}, arise from changes in operating conditions represented in the training dataset or environmental factors such as lighting variations and sensor occlusions \cite{robotics_ood_overview}. A significant distributional shift leads to degradation in measurement performance. Recent advances in sensing and perception strategies have addressed challenges related to lighting and occlusion \cite{low_lighting_jia, low_lighting_kim, occ_det_Moller, occ_det_Su}. However, measurement errors stemming from operating conditions, excluding factors such as lighting and occlusion, can be modeled as state-dependent errors~\cite{vision_based_swarm_control, sarah_dean_full_tehcnical_report}. In such scenarios, driving the system back to a region with low distributional shift necessitates preemptive control measures.

Traditional additive zero-mean gaussian noise models, which are widely used in the controls community, are not representative of such state-dependent measurement errors \cite{ERAS}. Despite this limitation, several safety-critical control techniques have been proposed to mitigate measurement uncertainty with LECs in the loop. In \cite{sarah_dean_full_tehcnical_report}, Dean et al. use a perception LEC for autonomous control of an Ackermann style robot. Their formulation is tightly coupled with an LQR controller, limiting its applicability to other control architectures. Additionally, they assume a maximum bound on the distribution shift of the LEC during deployment. A similar assumption was made to develop Measurement-Robust CBFs, which rely on a fixed global worst-case bound on measurement error to guarantee safety ~\cite{cbf_measurement_robust_dean,cbf_measurement_robust}. Although such techniques ensure safety under fixed worst-case bounds, they may result in overly conservative control in certain operating regions.

% Inspired by recent advances in belief space planning for learning-enabled controllers (LECs)~\cite{EpiKF},
Belief Control Barrier Functions (BCBFs) \cite{belief_cbf} enforce probabilistic safety constraints without relying on fixed bounds on measurement errors. However, the original formulation assumes additive, zero-mean Gaussian noise that is independent of the state. In this work, we extend the BCBF framework to incorporate state-dependent measurement noise, which provides a more faithful representation of complex sensor modalities in safety-critical applications. Performance of BCBFs heavily depends on the accuracy of the underlying belief, which is propagated using an Extended Kalman Filter (EKF) algorithm. To maintain accurate belief updates under state-dependent measurement noise, we employ the Generalized EKF (GEKF) algorithm, which extends the EKF measurement update process to account for state-dependent measurement noise. Specifically, in this work, we:
\begin{itemize}
    \item Integrate the Generalized Extended Kalman Filter (GEKF) into the BCBF framework to enable safety-critical control under state-dependent measurement noise; and
    \item Demonstrate safer and less conservative control using the BCBF-GEKF approach through two simulated scenarios: (i) setpoint tracking with a 1D single-integrator system, and (ii) trajectory tracking with a 2D unicycle system.
\end{itemize}
The rest of the paper is organized as follows. In Sec.~\ref{sec:problem_setting}, we provide mathematical preliminaries and formally state our problem. Sec.~\ref{sec:technical_approach} describes our solution, where we pick a state-dependent noise model and extend the BCBF framework using GEKF. Sec.~\ref{sec:results} presents a comparative study of BCBF-GEKF versus BCBF-EKF through simulated results. Finally, we present our conclusion and outline future directions in Sec.~\ref{sec:conclusion}.

%------------------------------------------------------------------------------------------------

\section{Problem Formulation}
\label{sec:problem_setting}

We use the following notation throughout the paper. $\R$ and $\R_+$ denote the set of real and non-negative real numbers respectively. A multivariate gaussian variable $x$ is denoted as $x \sim \mathcal{N}(u, \Sigma)$, with mean vector $\mu$ and covariance $\Sigma$. A function $\phi: \R \rightarrow \R$ is an extended class-$\mathcal{K_{\infty}}$ function if $\phi(0)=0$ and $\phi$ is strictly increasing on the interval $(-\infty,\infty)$. The interior and boundary of a closed set $\mathcal{S}$ are denoted as $\mathrm{Int}(\mathcal{S})$ and $\partial \mathcal{S}$ respectively. The r-th order Lie derivative of a continuously differentiable function $V:\R^n \mapsto \R$ along a vector field $f: \R^n \mapsto \R^n$ at a point $x \in \R^n$ is denoted by $L^r_fV(x) \triangleq \frac{\partial^r V}{\partial x^r} f(x)$, where $r$ is omitted if $r=1$. Finally, $\mathbf{1}_n$ denotes a vector of ones of length $n$.

\subsection{Deterministic Forward Invariance}

Consider the following nonlinear, control-affine system:
\begin{equation}
    \dot{x} = f(x) + g(x)u,
    \label{eq:deterministic_system_model}
\end{equation}
where $x \in \mathbb R^n$ is the state, $u \in \mathcal{U} \subseteq \mathbb R^m$ is the control input, and functions $f: \mathbb R^n \mapsto \mathbb R^n$ and $g: \mathbb R^n \mapsto \mathbb R^{n \times m}$ are the known, locally Lipschitz drift vector and control matrix respectively. Let $h: \mathbb R^n \mapsto \mathbb R$ be a continuously differentiable function. A safe set $\mathcal{C} \subset \mathbb R^n$ is defined as the zero super-level set of $h$ such that
\begin{subequations}\label{eq:safe_set}
\begin{align}
    \mathcal{C} &= \{x \in \mathbb R^n \mid h(x) \geq 0\}, \\
    \partial \mathcal{C} &= \{x \in \mathbb R^n \mid h(x) = 0\}.
\end{align}
\end{subequations}

\begin{definition}[Forward Invariance]
    Set $\mathcal{C}$ is said to be \emph{forward invariant} with respect to system \eqref{eq:deterministic_system_model} if for every $x_0 \in \mathcal{C}$, $x(t) \in \mathcal{C} \;\forall t \geq 0$ where $x(0) = x_0$. 
\end{definition}

\begin{definition}[Control Barrier Function]
    Given a set $\mathcal{C}$ defined by \eqref{eq:safe_set} for a continuously differentiable function $h: \mathbb R^n \mapsto \R$ with $0$ a regular value, $h$ is a \emph{\acrfull{cbf}} for system \eqref{eq:deterministic_system_model} with respect to $\mathcal{C}$ if there exists a locally Lipschitz function $\phi \in \mathcal{K}_{\infty}$ such that, $\forall x \in \mathbb R^n$,
    \begin{equation}\label{eq:cbf_condition}
        \sup_{u \in \mathcal{U}}L_fh(x) + L_gh(x)u \geq -\phi\left(h(x)\right).
    \end{equation}
\end{definition}

\begin{theorem} [\hspace{-0.3pt}\cite{cbf_theory_and_applications}, Thm. 2]
    Given a set $\mathcal{C}$ as defined in \eqref{eq:safe_set} for a continuously differentiable function $h: \mathbb R^n \mapsto \mathbb R$, any locally Lipschitz continuous controller $u$ that satisfies \eqref{eq:cbf_condition} will render $\mathcal{C}$ forward invariant. 
    \label{thm:cbf_forward_invariance}
\end{theorem}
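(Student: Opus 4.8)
The plan is to combine a differential-inequality argument along trajectories with the comparison lemma from scalar ODE theory, which is the standard route for this result. First I would fix an arbitrary $x_0 \in \mathcal{C}$ and the given locally Lipschitz controller $u$, and note that since $f$ and $g$ are locally Lipschitz and $u$ is locally Lipschitz, the closed-loop vector field $x \mapsto f(x) + g(x)u(x)$ is locally Lipschitz; hence there exists a unique maximal solution $x(\cdot)$ of \eqref{eq:deterministic_system_model} with $x(0) = x_0$ on some interval $[0, T_{\max})$.

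Next I would introduce the scalar signal $\lambda(t) \triangleq h(x(t))$. Since $h$ is continuously differentiable and $x(\cdot)$ is absolutely continuous, $\lambda$ is differentiable along the trajectory with $\dot{\lambda}(t) = L_f h(x(t)) + L_g h(x(t)) u(x(t))$. Because $u(x(t)) \in \mathcal{U}$ and the closed-loop controller satisfies \eqref{eq:cbf_condition} at every state, this produces the differential inequality $\dot{\lambda}(t) \geq -\phi(\lambda(t))$ on $[0, T_{\max})$, with initial condition $\lambda(0) = h(x_0) \geq 0$.

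The core step is then to compare $\lambda$ against the scalar initial value problem $\dot{y} = -\phi(y)$, $y(0) = \lambda(0) \geq 0$. Because $\phi$ is locally Lipschitz this IVP has a unique solution, and because $\phi(0) = 0$ the constant function $y \equiv 0$ is also a solution of $\dot{y} = -\phi(y)$; by uniqueness, a solution starting from $y(0) \geq 0$ can therefore never become negative, i.e. it remains in $\R_+$ on its interval of existence. The comparison lemma then yields $\lambda(t) \geq y(t) \geq 0$ for all $t$ in the common interval of existence, so $h(x(t)) \geq 0$ and hence $x(t) \in \mathcal{C}$, which is exactly forward invariance. (An alternative is a Nagumo-type boundary argument showing $\dot{h} \geq 0$ whenever $h(x) = 0$, but the comparison-lemma version is cleaner and matches the statement of \cite{cbf_theory_and_applications}.)

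I expect the main obstacle to be the bookkeeping around intervals of existence rather than any conceptual difficulty: one must argue that the comparison solution $y$ persists at least as long as $x$ does — which follows because $y$ is sandwiched between $0$ and $\lambda(0)$ and $\phi$ is well behaved on that compact range — and that the conclusion holds on all of $[0,T_{\max})$; to obtain the global-in-time statement one additionally observes that the trajectory cannot exit $\mathcal{C}$ in finite time and invokes the usual solution-extension arguments. A minor but essential technical point is the regularity of $\phi \in \mathcal{K}_{\infty}$ near the origin, which is precisely why the hypotheses require $\phi$ to be locally Lipschitz: without it the equilibrium $y \equiv 0$ need not be the unique solution and the trapping argument fails.
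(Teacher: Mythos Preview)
Your argument is correct and is precisely the standard comparison-lemma proof from \cite{cbf_theory_and_applications}, but note that the present paper does not actually prove this theorem: it simply cites it as Thm.~2 of \cite{cbf_theory_and_applications} and moves on. So there is no ``paper's own proof'' to compare against; your writeup would serve as a faithful reconstruction of the cited result.
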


\subsection{Probabilistic Forward Invariance}
The guarantee of forward invariance stated in Theorem \ref{thm:cbf_forward_invariance} is based on knowledge of the true system state, which is hardly the case for real world systems. In such systems, we \emph{estimate} the state by taking into account process and measurement noise. Consider the following stochastic system model
\begin{subequations}\label{eq:stochastic_system_model}
    \begin{align}
    \dot{x} &= f(x) + g(x)u + w,\label{eq:stochastic_process_model} \\
    z &= \ell(x) + v, \label{eq:stochastic_measurement_model}
    \end{align}
\end{subequations}
where $f$ and $g$ are as defined for \eqref{eq:deterministic_system_model}, while $\ell: \mathbb R^n \mapsto \R^o$ is a non-linear observation function. This system is corrupted by independent process noise $w \sim \mathcal{N}(0, Q)$ for $Q$ $\in \R_+^{n \times n}$, and measurement noise $v \sim \mathcal{N}(\mu_v, R)$ for $\mu_v \in \R^o$ and $R \in \R_+^{o \times o}$. Since the forward invariance guarantee in theorem \ref{thm:cbf_forward_invariance} is defined for the true state $x$, it breaks down due to the presence of noise components because $x$ can no longer be accurately determined. Therefore, we need to modify CBF definition \eqref{eq:cbf_condition} to account for uncertainty.

To capture uncertainty, we need to determine the underlying distribution of the state and measurements. A Guassian distribution is widely used in robotics applications to represent stochastic system models \cite{optimal_and_robust_estimation, prob_robotics}. The unimodal assumption of a Guassian distribution holds true when a reliable initial estimate is available \cite{prob_robotics}. Additionally, if a good linear approximation of the system model \eqref{eq:stochastic_system_model} can be obtained, Gaussian state estimators offer tractable and accurate estimation \cite{prob_robotics}. For a multivariate Gaussian variable, the belief $b$ of state $x(t)$ in the stochastic process model \eqref{eq:stochastic_process_model} is represented as:
\begin{equation}\label{eq:belief}
    b = (\mu(t), \Sigma(t)),
\end{equation}
where $\mu \in \mathbb{R}^n$ and $\Sigma \in \mathbb{R}^{n \times n}$ represent the mean vector and covariance matrix respectively \cite{optimal_and_robust_estimation}. For non-linear drift vector $f$ and control matrix $g$ in process model \eqref{eq:stochastic_process_model}, the belief of the system can be propagated in continuous time using a Gaussian state estimator as \cite{optimal_and_robust_estimation}:
\begin{equation}
\aligned
\dot{\mu} &= f(\mu) + g(\mu)u, \\
\dot{\Sigma} &= F(\mu, u) \Sigma  + \Sigma F(\mu, u)^T + Q, 
\endaligned
\label{eq:state_estimator_time_update}
\end{equation}
where 
\begin{equation*}
    F(x, u) = \frac{\partial}{\partial x}\big(f(x) + g(x)u\big)
\end{equation*}
is the Jacobian of the deterministic part of the process model \eqref{eq:stochastic_process_model} with respect to the state vector, evaluated at the mean $\mu$. The Gaussian state estimator updates this belief under discrete measurements as:
\begin{equation}
\aligned
\mu^+ &= \mu^- + K_k (z_k - \hat{z_k}), \\
\Sigma^+ &= \Sigma^- - K_kH_k\Sigma^-, \\
\endaligned
\label{eq:state_estimator_measurement_update}
\end{equation}
where $z_k$ and $\hat{z_k}$ are the true and predicted measurements at the discrete time step $k$. $H_k = \left. \frac{\partial \ell}{\partial x} \right|_{x=\mu^{-}}$ is the Jacobian of the (non-linear) measurement model $\ell(x)$ with respect the state vector $x$, evaluated at the mean $\mu^-$. $K$ is an estimator-dependent linear gain matrix.

To enforce probabilistic forward invariance, safety constraints must be reformulated in the belief space. A common approach to express safety constraints for CBFs is to define linear inequalities with respect to the state, known as half-space constraints \cite{chance_constraint_path_planning, chance_constraint_mav, belief_cbf}. A half-space constraint is defined as $\alpha^T x - \beta \geq 0$, where $\alpha \in \mathbb{R}^n$ is a coefficient vector and $\beta \in \mathbb{R}$ is a scalar offset. To map this half-space constraint to belief-space, we compute the probability of satisfying the half space constraint for a given belief $b = (\mu, \Sigma)$ of the state $x$ as a chance constraint \cite{chance_constraint_path_planning}:
    \begin{equation}
        \Pr[\alpha^T x - \beta \geq 0] = \frac{1}{2} \left( 1 + \text{erf} \left( \frac{\alpha^T \mu - \beta}{\sqrt{2 \alpha^T \Sigma \alpha}} \right) \right),
    \end{equation}
where $\text{erf}$ is the standard error function \cite{chance_constraint_path_planning}. To capture the expected worst-case behavior at the tail-end of the distribution of $\alpha^T x - \beta$, we compute the expected cost of violating this constraint for a given risk level $\delta$ using Conditional Value at Risk (CVaR) - a coherent risk metric \cite{cvar}. $\text{CVaR}_{\delta}(\alpha^Tx - \beta)$ for given belief $b = (\mu, \Sigma)$ is computed as \cite{CVaR_BPOE}: 
\begin{equation}
    \text{CVaR}_{\delta}(\alpha^Tx - \beta) = \alpha^T \mu \;-\; \sqrt{\alpha^T \Sigma \alpha}\; \frac{f\!\left(q_{\delta}(\frac{\alpha^T x - \beta}{\sqrt{\alpha^T \Sigma \alpha}})\right)}{\delta},
    \label{eq:CVaR}
\end{equation}
where $q_\delta(x)$ and $f$ are the $\delta$-quantile and probability density function of a standard normal gaussian variable. For a given half-space constraint $\alpha^T x - \beta \geq 0$ and risk level $\delta$, the following are equivalent \cite{belief_cbf}:
\begin{equation}
    \Pr\!\big(\alpha^Tx - \beta  \geq 0 \big) \geq 1 - \delta \;\;\Longleftrightarrow\;\; 
    \text{CVaR}_{\delta}\!\big(\alpha^T x - \beta \big) \geq 0, \label{eq:CVaR_constraint}
\end{equation}
We can now construct a \emph{belief safe set} $\mathcal{C}_b$ defined as:
\begin{equation}
    \mathcal{C}_b = \{b \in \mathbb{R}^{n_b} \mid h_b(b) \geq 0 \},
    \label{eq:belief_safe_set}
\end{equation}
where $n_b = \frac{n^2 + 3n}{2}$ is the dimension of the belief space \cite{belief_cbf} where $h_b(b)$ is a \emph{belief barrier function} defined as:
\begin{equation}
\begin{aligned}
    h_b(b) &:= \text{CVaR}_{\delta} \!\left( \alpha^T x - \beta \right) \\
    &= \alpha^T \mu - \beta \;-\; \sqrt{\alpha^T \Sigma \alpha}\;
       \frac{f\!\big(q_{\delta}\big)}{\delta}.
\end{aligned}
\label{eq:belief_cbf}
\end{equation}
\subsection{State-dependent measurement noise}
Complex sensor modalities like vision-based LECs do not exhibit ``fixed" noise characteristics: the change in measurement error with respect to the state is monotonic and heteroskedastic \cite{ERAS}. While \eqref{eq:stochastic_measurement_model} models stochastic measurement noise, the noise itself is independent of the state. Therefore, it does not represent an accurate sensor model for such modalities. To complete our problem formulation, we now consider a system with the following process and measurement model:
\begin{subequations}\label{eq:state_dependent_stochastic_system_model}
    \begin{align}
        \dot{x} &= f(x) + g(x)u + w, \\
        z &= \ell(x) + \aleph(x) \label{eq:generic_state_dependent_measurement_model}
    \end{align}
\end{subequations}
where $\aleph(x)$ is a state-dependent noise term, independent of the process noise $w$. We now define our main problem.

\begin{problem}\label{pb:main} Given a risk level $\delta \in (0, 1]$ and a belief safe set $\mathcal{C}_b$ \eqref{eq:belief_safe_set}, find a controller $u$ that satisfies $\Pr[\alpha^T x - \beta \geq 0] \geq 1 - \delta \; \forall \; t \geq 0$ and $h_b(b) \geq 0$ for the control affine system \eqref{eq:state_dependent_stochastic_system_model} using a state estimator of the form \eqref{eq:state_estimator_time_update}-\eqref{eq:state_estimator_measurement_update}. 
\end{problem}

\section{Technical Approach}\label{sec:technical_approach}
\subsection{Measurement and state estimation model}\label{sec:gekf}
To solve Problem \ref{pb:main}, we need to ensure that our estimator of choice maintains an accurate belief of the system \eqref{eq:state_dependent_stochastic_system_model}. Choice of the measurement noise model \eqref{eq:generic_state_dependent_measurement_model} affects convergence of the belief through the estimator's measurement update step \eqref{eq:state_estimator_measurement_update}. For vision-based LECs, state-dependent worst-case measurement errors have been approximated as linear \cite{vision_based_swarm_control} and polynomial \cite{sarah_dean_full_tehcnical_report} functions of the state. In this work, we model the worst-case measurement error as state-dependent noise that is linear in the state, defined by the following measurement model:
\begin{equation}\label{eq:mult_noise_equation}
    \aleph(x) = p \ell(x) + v,
\end{equation}
where $p \sim \mathcal{N}(\mu_p\mathbf{1}_o, P)$ with $\mu_p\ \in \R$ and $P \in \R_+^{o \times o}$ is the \emph{multiplicative} component which enforces state-dependence of the measurement noise in a linear fashion. The non-linear observation function $\ell(x)$ and additive noise $v$ are the same as defined for \eqref{eq:stochastic_measurement_model}. Note that $p$ is independent of $v$. Although a linear model might be stringent for the polynomial worst-case assumption in \cite{sarah_dean_full_tehcnical_report}, it is less conservative than the widely adopted alternatives of an additive zero-mean gaussian noise model or fixed global measurement error/uncertainty bound assumptions used previously \cite{FxTS_RaCBF, RCBF_uncertainty_bound, agrawal2022safe}. This type of linear noise model has been employed for state estimation and optimal control of a sensorimotor system \cite{mult_noise_sensorimotor}, and target tracking in a Wireless Network System using range-baring sensors \cite{mult_noise_WSN}. Based on the choice of this noise model, our final system model becomes:
\begin{subequations}\label{eq:mult_stochastic_system_model}
    \begin{align}
    \dot{x} &= f(x) + g(x)u + w\\
    z &= (1 + p) \ell(x) + v \label{eq:state_dependent_measurement_model}
    \end{align}
\end{subequations}
To obtain an accurate belief, we need to account for this noise in the measurement update step of form \eqref{eq:state_estimator_measurement_update}. In \cite{GEKF}, authors propose the Generalized Extended Kalman Filter (GEKF) algorithm, which builds on  \eqref{eq:state_estimator_measurement_update} by propagating multiplicative noise $p$ explicitly through the covariance update during a measurement:
\begin{equation}\label{eq:gekf_covariance_update}
    \Sigma^+  = \Sigma^- - (1 + \mu_p) K_{G_k} H_k \Sigma^-,
\end{equation}
where $K_G$ is a linear (Kalman) gain, and $H_k$ is as defined for (\ref{eq:state_estimator_measurement_update}). $K_G$ also implicitly captures the propagation of multiplicative noise $p$ and additive noise $v$ through the innovation covariance $S_{G_k}$:
\begin{equation}
     K_{G_k}   = (1 + \mu_p) H_k \Sigma^- S_{G_k}^{-1}.
\end{equation}
 $S_{G_k}$ is defined for multiplicative noise covariance $P = \sigma_p^2I_o, \sigma_p \in \mathbb{R}$ as:
\begin{equation}
    S_{G_k} = (1 + \mu_p)^2 H_k \Sigma^- H_k^T + \sigma_p^2 M_k + R,\label{eq:GEKF_S_def}\\
\end{equation} where:

\begin{equation*}
    M_k = \mathrm{diag} \left\{
    H_k \Sigma^- H_k^T + \ell(\mu^-)\ell(\mu^-)^T \right\}.
\end{equation*}
Here, the term $H_k \Sigma^- H_k^T$ projects the belief covariance into the 
observation space $\mathbb{R}^o$, scaled by the multiplicative noise mean factor 
$(1+\mu_p)^2$. The second term, $\sigma_p^2 M_k$, accounts for variance 
introduced by state-dependent multiplicative noise; specifically, the term 
$\sigma_p^2 \ell(\mu^-)\ell(\mu^-)^T$ reflects how uncertainty grows with the 
expected measurement $\ell(\mu^-)$. $R$ captures the contribution of the 
additive measurement noise $v$. Finally, $p$ and $v$ are propagated to the mean $\mu$ of the belief $b$ through $K_{G_k}$ as:
\begin{equation}
    \mu^+ = \mu^- + K_{G_k}\left[z_k - \hat{z_k}\right],
\end{equation}
where $z_k$ is the actual noisy measurement $\hat{z_k}$ is the predicted noisy measurement value at time step $k$. $\hat{z_k}$ is approximated by taking the expected value of the first order Taylor series approximation of $z_k$, evaluated at the mean $\mu^-$ of the prior belief $b^-$ \cite{GEKF}:
\begin{equation}
    \hat{z} \approx \mathbb{E}\left[z\right] = (1 + \mu_p) \ell(\mu^-) + \mu_v \mathbf{1}_o.
    \label{eq:GEKF_observation}
\end{equation}
We complete the solution to Problem ~\ref{pb:main} by employing Belief-Control Barrier Functions \cite{belief_cbf} in conjunction with the GEKF measurement update. We restate the BCBF definition here for reference:

\begin{definition}[BCBF~\cite{belief_cbf}]
Given a safe set $\mathcal{C}_b$, $h_b(b)$ is defined as a \textit{Belief Control Barrier Function (BCBF)} for the stochastic dynamical system \eqref{eq:state_dependent_stochastic_system_model} if $\forall b$ satisfying $h_b(b) \geq 0, \ \exists u \in \mathcal{U}$ such that
\begin{equation}
    \frac{\partial h_b}{\partial b} \left( f_b(b) + g_b(b) u \right) \geq -h_b(b).
    \label{eq:BCBF}
\end{equation}
\end{definition}
In the following section, we discuss how the innovation covariance $S_{G_k}$ impacts probabilistic guarantees of leaving or staying in the safe set during a discrete measurement update, which were previously introduced in \cite{belief_cbf} for a fixed-noise measurement model.

\subsection{Probabilistic Safety Guarantees under multiplicative state-dependent noise}
In \cite{belief_cbf}, Vahs et al. use an Extended Kalman Filter with continuous-time prediction \eqref{eq:state_estimator_time_update} and discrete-time update steps which propagates the belief of the system \eqref{eq:stochastic_system_model}. The discrete update steps can potentially cause the belief $b$ to discretely jump outside the safe set safe set $\mathcal{C}_b$. The bounds on the probabilities of the belief leaving or staying in the belief safe-set depend on the covariance of the zero-mean innovation term $\theta = K_k \left( z_k - \ell(\mu^{-}) \right)$, where $z_k$ is the observation obtained from the sensor and $K_k$ is the Kalman gain of an Extended Kalman Filter (EKF) state estimator at discrete time step $k$. This covariance is defined in \cite{belief_cbf} as:
\begin{equation}
    \Lambda = K_k S_k K_k^{T},
    \label{eq:innovation_covariance}
\end{equation}
where $S_k$ is the innovation covariance of the EKF. $K_k$ and $S_k$ are defined as:
\begin{subequations}\label{eq:ekf_update}
\begin{align}
    K_k &= \Sigma^- H_k^T S_k^{-1}. \\
    S_k &= H_k \Sigma^- H_k^T + R
\end{align}
\end{subequations}
where $H_k$ is the jacobian of the non-linear observation function as defined for (\ref{eq:state_estimator_measurement_update}) and $R$ is the covariance of the additive zero-mean gaussian noise in \eqref{eq:stochastic_measurement_model}. Theorems 2 and 3 in \cite{belief_cbf} state the probability of the belief leaving or staying in the safe set, respectively. These are restated as Theorems \ref{th:Theorem1} and \ref{th:Theorem2} using the CVaR definition \eqref{eq:CVaR} below:

\begin{theorem} [\hspace{-0.3pt}\cite{belief_cbf}, Thm. 2]
If the control input \( u(t) \) satisfies \eqref{eq:BCBF}, the probability of leaving the safe set under a discrete transition, i.e., $\Pr\left[h_b(b^+) < 0\right]$, is bounded by
\begin{equation}
    \Pr\left[h_b(b^+) < 0\right] \leq \frac{1}{2} \left( 1 - \operatorname{erf} \left( 
    \frac{\xi(b^-)}{\sqrt{2 \alpha^T \Lambda \alpha}} \right) \right),
\end{equation}
\label{th:Theorem1}
where
\begin{equation*}
    \xi(b^-) \;=\; \frac{f(q_{\delta})}{\delta} 
    \left( \sqrt{2 \alpha^T \Sigma^{-} \alpha} 
    - \sqrt{2 \alpha^T (I - K_k H_k)\, \Sigma^{-} \alpha} \right).
\end{equation*}
\end{theorem}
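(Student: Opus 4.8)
The plan is to reduce the statement to a one-dimensional Gaussian tail estimate, by combining three facts: forward invariance of $h_b$ along the continuous belief flow \eqref{eq:state_estimator_time_update}, so that $h_b(b^-) \ge 0$ just before a discrete update; an exact decomposition of $h_b(b^+)$ after the update \eqref{eq:state_estimator_measurement_update} in which the only random quantity is the innovation-driven shift of the mean; and the standard-normal cumulative distribution function.

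First I would establish $h_b(b^-) \ge 0$. The BCBF inequality \eqref{eq:BCBF} is exactly the CBF condition of Theorem~\ref{thm:cbf_forward_invariance} applied to the continuous belief dynamics \eqref{eq:state_estimator_time_update} with the extended class-$\mathcal{K}_\infty$ function $\phi(r) = r$; hence any locally Lipschitz $u(t)$ satisfying \eqref{eq:BCBF} renders $\mathcal{C}_b$ forward invariant under the flow in between discrete measurements, and in particular $h_b(b^-) \ge 0$ immediately before the update. Expanding this with the CVaR form \eqref{eq:belief_cbf} of $h_b$ gives the slack inequality $\alpha^T\mu^- - \beta \ge \sqrt{\alpha^T\Sigma^-\alpha}\,f(q_\delta)/\delta$, which is the key ingredient carried into the discrete step.

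Next I would analyze a single discrete update. Treating the prior $b^- = (\mu^-, \Sigma^-)$ as fixed and the incoming measurement $z_k$ as the sole source of randomness, \eqref{eq:state_estimator_measurement_update} gives $\mu^+ = \mu^- + \theta$ with the zero-mean innovation $\theta = K_k(z_k - \hat z_k)$, while $\Sigma^+ = (I - K_kH_k)\Sigma^-$ is deterministic. Linearizing $\ell$ about $\mu^-$ and using that $b^-$ models $x \sim \mathcal{N}(\mu^-, \Sigma^-)$ with additive noise of covariance $R$, the residual $z_k - \hat z_k$ is zero-mean Gaussian with covariance $S_k = H_k\Sigma^-H_k^T + R$, so $\alpha^T\theta \sim \mathcal{N}(0, \alpha^T\Lambda\alpha)$ with $\Lambda = K_kS_kK_k^T$ as in \eqref{eq:innovation_covariance}. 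Substituting $\mu^+$ and $\Sigma^+$ into \eqref{eq:belief_cbf} rewrites the event $\{h_b(b^+) < 0\}$ as $\{\alpha^T\theta < T\}$ for the explicit threshold $T = \beta - \alpha^T\mu^- + \sqrt{\alpha^T\Sigma^+\alpha}\,f(q_\delta)/\delta$. Using the slack inequality and $\Sigma^+ \preceq \Sigma^-$ (which also shows $\xi(b^-) \ge 0$), one bounds $T$ above by a negative scalar multiple of $\xi(b^-)$; evaluating $\Pr[\alpha^T\theta < T]$ with the Gaussian CDF and using the monotonicity and oddness of $\operatorname{erf}$ together with $\Phi(z) = \tfrac{1}{2}\bigl(1 + \operatorname{erf}(z/\sqrt{2})\bigr)$ then yields the stated bound. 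The state-dependent-noise counterpart follows by the identical argument once $K_k$, $S_k$ and the covariance update are replaced by the GEKF gain, the innovation covariance $S_{G_k}$ of \eqref{eq:GEKF_S_def}, and $\Sigma^+ = \Sigma^- - (1 + \mu_p)K_{G_k}H_k\Sigma^-$, with the GEKF predicted measurement \eqref{eq:GEKF_observation} ensuring that $\theta$ remains zero-mean.

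I expect the main obstacle to lie in the characterization of $\theta$ used above: one genuinely obtains only that the innovation is \emph{approximately} zero-mean Gaussian with covariance $\Lambda$, since this rests on the first-order linearization of the nonlinear observation map $\ell$ about $\mu^-$ and on the modeling convention that the prior belief is Gaussian and held fixed while $z_k$ varies. In the GEKF setting there is the additional wrinkle of checking, via \eqref{eq:GEKF_observation}, that $\mathbb{E}[z_k - \hat z_k] \approx 0$, so that the covariance that enters $\Lambda$ is $S_{G_k}$ rather than $S_k$. Once the law of $\alpha^T\theta$ is pinned down, the remaining steps --- expanding the CVaR expression for $h_b$ and bounding an error function --- are routine.
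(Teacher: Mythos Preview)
The paper does not actually supply a proof of this theorem: it is restated (recast in CVaR notation) from \cite{belief_cbf}, Thm.~2, and no argument for it appears anywhere in the body of the present paper. There is therefore nothing here to compare your proposal against directly.

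That said, your outline is the natural route to such a bound and is almost certainly what the cited reference does: use the BCBF condition as a CBF for the continuous belief flow to secure $h_b(b^-)\ge 0$; write $h_b(b^+)$ via the deterministic covariance update $\Sigma^+=(I-K_kH_k)\Sigma^-$ and the random mean shift $\alpha^T\theta$; then bound the resulting one-dimensional Gaussian tail using $\Phi(z)=\tfrac12(1+\operatorname{erf}(z/\sqrt{2}))$. Your identification of the first-order linearization of $\ell$ (so that $\theta$ is only approximately zero-mean Gaussian with covariance $\Lambda$) as the place where the argument is heuristic is exactly right, and it is the same approximation the present paper leans on in Propositions~\ref{prop:zero_mean}--\ref{prop:innovation_covariance}. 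Your closing remarks on the GEKF variant in fact go further than the paper itself: the paper only recomputes the innovation moments to obtain $\Lambda_G=K_{G_k}S_{G_k}K_{G_k}^T$ and asserts that the theorem's bound must be re-evaluated with this new $\Lambda$, without re-deriving the bound from scratch.
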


\begin{theorem}[\hspace{-0.3pt}\cite{belief_cbf}, Thm. 3]
    The belief state \( b^+ \) remains in the safe set \( \mathcal{C}_b \)
    with probability \( \Pr\left[b^+ \in \mathcal{C}_b\right] \geq 1 - \varepsilon \)
    during the discrete belief update from $b^-$ to $b^+$ if the control input \( u(t) \) satisfies equation \eqref{eq:BCBF} for an augmented BCBF function \( \tilde{h}_b(b) = h(b) - \gamma \) and \( \tilde{h}_b(b^-) \geq 0 \) for
    \begin{equation*}
        \gamma \geq \sqrt{2 \alpha^T \Lambda \alpha} 
        \left( \operatorname{erf}^{-1}(1 - 2\varepsilon) \right) - \xi(b^-),
    \end{equation*}
    where $\varepsilon$ is a user-defined risk value.
    \label{th:Theorem2}
\end{theorem}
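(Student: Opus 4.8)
The plan is to parallel the argument behind Theorem~\ref{th:Theorem1}, but to replace the pre-update lower bound $h_b(b^-)\ge 0$ by the stronger bound that the augmented barrier $\tilde h_b = h_b - \gamma$ supplies, and then to choose $\gamma$ so that the resulting tail probability is at most $\varepsilon$. First I would make the discrete update explicit: from the measurement update \eqref{eq:state_estimator_measurement_update} with the EKF gain and innovation covariance \eqref{eq:ekf_update}, a measurement shifts the mean by the innovation $\theta = K_k\big(z_k - \ell(\mu^-)\big)$, i.e. $\mu^+ = \mu^- + \theta$, and contracts the covariance to $\Sigma^+ = (I - K_kH_k)\Sigma^-$. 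Substituting these into the belief barrier function \eqref{eq:belief_cbf} and adding and subtracting the pieces that form $h_b(b^-)$ gives the pathwise identity
\begin{equation*}
    h_b(b^+) = h_b(b^-) + \xi(b^-) + \alpha^T\theta ,
\end{equation*}
in which $\xi(b^-)$ is precisely the deterministic covariance-contraction term already appearing in Theorem~\ref{th:Theorem1} and $\alpha^T\theta$ is the scalar projection of the innovation onto the half-space normal $\alpha$.

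Next I would use that, under the Gaussian linearization hypotheses, $z_k - \ell(\mu^-)$ is zero-mean with covariance $S_k$, so $\theta$ is zero-mean Gaussian with covariance $\Lambda = K_kS_kK_k^T$ as in \eqref{eq:innovation_covariance}, and hence $\alpha^T\theta \sim \mathcal{N}(0,\,\alpha^T\Lambda\alpha)$ is symmetric about the origin. Because $u(t)$ satisfies \eqref{eq:BCBF} for $\tilde h_b$ and $\tilde h_b(b^-)\ge 0$, we have $h_b(b^-)\ge \gamma$, so the identity above gives pathwise $h_b(b^+) \ge \gamma + \xi(b^-) + \alpha^T\theta$ and therefore
\begin{equation*}
    \Pr\big[b^+ \in \mathcal{C}_b\big] \;=\; \Pr\big[h_b(b^+)\ge 0\big] \;\ge\; \Pr\big[\alpha^T\theta \ge -(\gamma + \xi(b^-))\big].
\end{equation*}
Evaluating the right-hand side via the error-function form of the Gaussian CDF together with the symmetry of $\alpha^T\theta$ gives $\tfrac12\big(1 + \operatorname{erf}(\tfrac{\gamma + \xi(b^-)}{\sqrt{2\alpha^T\Lambda\alpha}})\big)$; requiring this to be at least $1-\varepsilon$ and inverting the strictly increasing $\operatorname{erf}$ yields $\gamma + \xi(b^-) \ge \sqrt{2\alpha^T\Lambda\alpha}\,\operatorname{erf}^{-1}(1-2\varepsilon)$, which rearranges to the stated lower bound on $\gamma$.

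The step I expect to be the main obstacle is the first one: establishing the decomposition cleanly and, in particular, justifying that $\theta$ is (to the order of the linearization) zero-mean with covariance $\Lambda$ and that $\Sigma^-$ and $\Sigma^+$ enter $\xi(b^-)$ exactly as stated — this is where the structure of the specific estimator enters, and for the GEKF one reruns the same computation carrying the modified gain and innovation covariance $K_{G_k}$ and $S_{G_k}$ from \eqref{eq:gekf_covariance_update}--\eqref{eq:GEKF_S_def} in place of the EKF quantities. Two supporting observations I would record: $\xi(b^-)\ge 0$ since the measurement update contracts the covariance ($\Sigma^+ \preceq \Sigma^-$), which is what lets $\gamma$ be taken small — possibly even negative — rather than conservatively large; and the result bounds only a single discrete jump, so forward invariance for all $t \ge 0$ follows by interleaving this bound with the continuous-flow guarantee that \eqref{eq:BCBF} on $\tilde h_b$ provides between measurements.
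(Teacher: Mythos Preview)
The paper does not actually prove this theorem: it is quoted verbatim from \cite{belief_cbf} (Theorem~3 there) and merely restated here alongside Theorem~\ref{th:Theorem1} for reference, so there is no in-paper proof to compare your proposal against. Your argument is the standard one and is essentially what the cited reference does: write $h_b(b^+)$ via \eqref{eq:belief_cbf} with $\mu^+=\mu^-+\theta$ and $\Sigma^+=(I-K_kH_k)\Sigma^-$, isolate the deterministic covariance-contraction piece $\xi(b^-)$ and the random scalar $\alpha^T\theta\sim\mathcal N(0,\alpha^T\Lambda\alpha)$, use $h_b(b^-)\ge\gamma$ from the augmented barrier, and invert the Gaussian tail to get the bound on $\gamma$. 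Your remark that the same computation carries over to the GEKF by swapping in $K_{G_k}$, $S_{G_k}$, and the modified covariance update \eqref{eq:gekf_covariance_update} is exactly the point the present paper makes in Section~\ref{sec:innovation} via Propositions~\ref{prop:zero_mean} and~\ref{prop:innovation_covariance}, which establish that $\theta_G$ is zero-mean with covariance $\Lambda_G=K_{G_k}S_{G_k}K_{G_k}^T$; that is the only new ingredient needed beyond the cited result.
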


Here $b^-$ and $b^+$ represent the prior and posterior beliefs computed before and after the discrete measurement update respectively. Note that $S_{G_k}$ can be written as:

\begin{equation*}
    S_{G_k} \;=\; S_k \;+\; \mu_p^2\,H_k \Sigma_k^- H_k^T \;+\; \sigma_p^2\, M_k,
\end{equation*}
where $M_k \succeq 0$. Therefore, $S_{G_k} \succeq S_k$, given $\mu_p \ge 0$ and $\sigma_p^2>0$. Since
\begin{equation*}
\begin{aligned}
    K_{G_k} \;&=\; (1+\mu_p)\,\Sigma_k^- H_k^T S_{G_k}^{-1}, \text{ and} \\
    K_k \;&=\; \Sigma_k^- H_k^T S_k^{-1},
\end{aligned}
\end{equation*}
a larger innovation covariance yields a smaller Kalman gain (in norm). Therefore, GEKF places less weight on its measurements, leading to more robust  estimation under state-dependent measurement noise. Since the covariance of the innovation term $\Lambda$, defined in \eqref{eq:innovation_covariance}, depends on the estimator’s innovation covariance, it directly influences the probability bounds in Theorems~\ref{th:Theorem1} and~\ref{th:Theorem2} for leaving or remaining in the safe set $\mathcal{C}_b$. Since $\Lambda$ only holds for observation model \eqref{eq:stochastic_measurement_model}, the probabilistic bounds guaranteed by these theorems would not be valid under the assumption of the state-dependent measurement model \eqref{eq:state_dependent_measurement_model}. Therefore, we need to recompute the mean and covariance of the innovation term to represent more accurate probabilistic bounds for BCBFs operating under observation model \eqref{eq:state_dependent_measurement_model}.

%----------------------------------------------------------------------------------------------------------------------------------------

\subsection{Innovation covariance under a multiplicative noise model}\label{sec:innovation}
For a given measurement realization $z_k$ and its expectation $\hat{z}_k$, the innovation term for the GEKF is defined as
\begin{equation}
   \theta_G = K_G(z_k - \hat{z}_k)\:.
   \label{eq:GEKF_innovation_term}
\end{equation}
To derive the expression for the mean of the innovation term, a Taylor series approximation can be used to calculate the propagation of mean and covariance of a random variable $x$ through a non-linear function $\eta$ as~\cite{belief_cbf}:
\begin{align}
\mathbb{E}\{\eta(x)\} &\approx \eta(\mathbb{E}\{x\}), \label{eq:assumption_1_mean} \\
\text{Var}\{\eta(x)\} &\approx \left( \frac{\partial \eta}{\partial x} \mathbb{E}\{x\} \right) \text{Var}\{x\} \left( \frac{\partial \eta}{\partial x} \mathbb{E}\{x\} \right)^T \label{eq:assumption_1_cov}\:.
\end{align}
\begin{proposition}\label{prop:zero_mean}
    The innovation term~\eqref{eq:GEKF_innovation_term} is a zero-mean random variable.
\end{proposition}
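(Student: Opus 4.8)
The plan is to take the expectation of \eqref{eq:GEKF_innovation_term} and show that the bracketed term $z_k - \hat z_k$ has zero mean, from which the claim follows immediately. The first observation is that the gain $K_{G_k}$ is a deterministic quantity: by \eqref{eq:GEKF_S_def} and the definition of $K_{G_k}$ it is a function only of the prior-belief parameters $(\mu^-,\Sigma^-)$, the Jacobian $H_k = \left.\frac{\partial\ell}{\partial x}\right|_{x=\mu^-}$, and the fixed noise parameters $\mu_p,\sigma_p^2,R,\ell(\mu^-)$; it does not depend on the particular measurement realization. Hence $K_{G_k}$ can be pulled outside the expectation, giving $\mathbb{E}\{\theta_G\} = K_{G_k}\big(\mathbb{E}\{z_k\} - \hat z_k\big)$, and it suffices to show $\mathbb{E}\{z_k\} = \hat z_k$.

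Next I would compute $\mathbb{E}\{z_k\}$ under the multiplicative measurement model \eqref{eq:state_dependent_measurement_model}, $z_k = (1+p)\ell(x) + v$, treating the state $x$ as distributed according to the prior belief $b^- = (\mu^-,\Sigma^-)$ and recalling that $p \sim \mathcal{N}(\mu_p \mathbf{1}_o, P)$ and $v \sim \mathcal{N}(\mu_v, R)$ are mutually independent and independent of $x$. Linearity of expectation splits this into $\mathbb{E}\{(1+p)\ell(x)\} + \mathbb{E}\{v\}$. Because $p$ is independent of $x$ and acts componentwise on $\ell(x)$, the first term factors as $\big(\mathbf{1}_o + \mu_p\mathbf{1}_o\big)\odot\mathbb{E}\{\ell(x)\} = (1+\mu_p)\,\mathbb{E}\{\ell(x)\}$, and $\mathbb{E}\{v\} = \mu_v\mathbf{1}_o$. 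Applying the first-order Taylor propagation rule \eqref{eq:assumption_1_mean} to the nonlinear observation map, $\mathbb{E}\{\ell(x)\} \approx \ell(\mathbb{E}\{x\}) = \ell(\mu^-)$, yields $\mathbb{E}\{z_k\} \approx (1+\mu_p)\ell(\mu^-) + \mu_v\mathbf{1}_o$, which is exactly the definition of $\hat z_k$ in \eqref{eq:GEKF_observation}. Therefore $\mathbb{E}\{z_k\} - \hat z_k \approx 0$, and substituting back gives $\mathbb{E}\{\theta_G\} = K_{G_k}\cdot 0 = 0$.

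The step I expect to be the main (and really only) obstacle is being precise about the two modeling approximations on which the equality $\mathbb{E}\{z_k\}=\hat z_k$ rests: first, that $\hat z_k$ is itself only a first-order Taylor surrogate for $\mathbb{E}\{z_k\}$ as stated in \eqref{eq:GEKF_observation}, so the conclusion is "zero-mean" in the same approximate sense in which the GEKF predicted measurement is defined; and second, that the factorization of $\mathbb{E}\{(1+p)\ell(x)\}$ requires the stated independence of $p$ from the state and the componentwise structure of the multiplicative noise in \eqref{eq:mult_noise_equation}. Once these are spelled out, the argument is a short computation and no further machinery is needed.
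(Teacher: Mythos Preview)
Your proposal is correct and follows essentially the same route as the paper: pull the gain $K_{G_k}$ outside the expectation, compute $\mathbb{E}\{z_k\}$ by using independence of $p$ from $x$ and the first-order approximation \eqref{eq:assumption_1_mean} to obtain $(1+\mu_p)\ell(\mu^-)+\mu_v\mathbf{1}_o$, and observe this coincides with $\hat z_k$ from \eqref{eq:GEKF_observation}. If anything, you are slightly more careful than the paper in flagging that the conclusion holds in the same first-order-approximation sense in which $\hat z_k$ is defined.
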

\begin{proof}
The mean of the GEKF innovation term is obtained from its expected value as:  
\begin{equation*}
    \mathbb{E}[\theta_G] = K_G \big( \mathbb{E}[z_k] - \mathbb{E}[\hat{z}_k] \big).
\end{equation*}
From \eqref{eq:GEKF_observation}, the expectation of the predicted measurement is evaluated as:
\begin{equation*}
    \mathbb{E}[\hat{z}_k] = (1 + \mu_p)\,\ell(\mu^-) + \mu_v \mathbf{1}_o,
\end{equation*}
while the expectation of the true measurement is given by:
\begin{equation*}
    \mathbb{E}[z_k] = (1 + \mu_p)\,\mathbb{E}[\ell(x)] + \mu_v \mathbf{1}_o.
\end{equation*}
Using \eqref{eq:assumption_1_mean}, the expectation of the observation function becomes:
\begin{equation*}
    \mathbb{E}[\ell(x)] = \ell(\mathbb{E}[x]).
\end{equation*}
Since $\mathbb{E}[x] = \mu^-$, it follows that:
\begin{equation*}
    \mathbb{E}[z_k] = (1 + \mu_p)\,\ell(\mu^-) + \mu_v \mathbf{1}_o.
\end{equation*}
Thus, the expected measurement and predicted measurement coincide, and the 
innovation term of the GEKF is a zero-mean random variable:
\begin{equation}
    \mathbb{E}[\theta_G] 
    = K_G \big( \mathbb{E}[z_k] - \mathbb{E}[\hat{z}_k] \big) 
    = 0.
    \label{eq:GEKF_innovation_term_mean}
\end{equation}
\end{proof}
\begin{proposition}\label{prop:innovation_covariance}
The covariance for the GEKF innovation term is given by:
\begin{equation}
    \Lambda_G = K_{G_k} S_{G_k}K_{G_k}^T,
\end{equation}
where $K_{G_k}$ and $S_k$ are the GEKF Kalman gain and innovation covariance respectively \cite{GEKF}.
\end{proposition}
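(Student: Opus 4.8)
The plan is to reduce the claim to the same structure exploited in the proof of Proposition~\ref{prop:zero_mean}. Because the GEKF gain $K_{G_k}$ is a deterministic matrix---a function of the prior belief $b^{-}$ only---bilinearity of covariance gives $\mathrm{Cov}(\theta_G) = K_{G_k}\,\mathrm{Cov}(z_k-\hat z_k)\,K_{G_k}^{T}$. Combined with Proposition~\ref{prop:zero_mean}, which establishes $\mathbb{E}[\theta_G]=0$ so that this is genuinely the covariance of a centered variable, the entire task becomes showing that $\mathrm{Cov}(z_k-\hat z_k)$ equals the GEKF innovation covariance $S_{G_k}$ of \eqref{eq:GEKF_S_def}.

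First I would note that $\hat z_k$, as given in \eqref{eq:GEKF_observation}, is deterministic---it depends only on $\mu^{-}$, $\mu_p$, and $\mu_v$---so $\mathrm{Cov}(z_k-\hat z_k) = \mathrm{Cov}(z_k)$. Then, adopting the usual filtering convention that, conditioned on the prior belief, the state obeys $x\sim\mathcal{N}(\mu^{-},\Sigma^{-})$ (the same convention already used in Proposition~\ref{prop:zero_mean} when writing $\mathbb{E}[\ell(x)]=\ell(\mu^{-})$), I would linearize the observation via the first-order Taylor approximation \eqref{eq:assumption_1_mean}--\eqref{eq:assumption_1_cov}: $\ell(x)\approx \ell(\mu^{-})+H_k(x-\mu^{-})$, so that $\mathbb{E}[\ell(x)]\approx\ell(\mu^{-})$ and $\mathrm{Cov}(\ell(x))\approx H_k\Sigma^{-}H_k^{T}$.

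The core computation is the covariance of $z_k=(\mathbf{1}_o+p)\,\ell(x)+v$ (Hadamard product in the first term, per \eqref{eq:state_dependent_measurement_model}), where $p\sim\mathcal{N}(\mu_p\mathbf{1}_o,\sigma_p^{2}I_o)$, $v\sim\mathcal{N}(\mu_v\mathbf{1}_o,R)$, and $p$, $x$, $v$ are mutually independent. I would carry this out entrywise. For a diagonal entry, independence of $p_i$ from $\ell(x)$ together with the product-variance identity gives $\mathrm{Var}\!\big((1+p_i)\ell_i(x)\big)=(1+\mu_p)^{2}\,\mathrm{Var}(\ell_i(x))+\sigma_p^{2}\big(\mathrm{Var}(\ell_i(x))+\ell_i(\mu^{-})^{2}\big)$; for an off-diagonal entry $i\neq j$, the diagonal structure $P=\sigma_p^{2}I_o$ annihilates $\mathrm{Cov}(p_i,p_j)$, leaving $(1+\mu_p)^{2}\,\mathrm{Cov}(\ell_i(x),\ell_j(x))$. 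Collecting these, substituting $\mathrm{Cov}(\ell(x))\approx H_k\Sigma^{-}H_k^{T}$, and adding the independent contribution $R$ of $v$ yields $\mathrm{Cov}(z_k)=(1+\mu_p)^{2}H_k\Sigma^{-}H_k^{T}+\sigma_p^{2}\,\mathrm{diag}\{H_k\Sigma^{-}H_k^{T}+\ell(\mu^{-})\ell(\mu^{-})^{T}\}+R$, which is precisely $S_{G_k}$ as defined in \eqref{eq:GEKF_S_def}. Substituting back gives $\Lambda_G=K_{G_k}S_{G_k}K_{G_k}^{T}$, completing the argument.

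I expect the main obstacle to be the entrywise moment bookkeeping for the Hadamard product $(\mathbf{1}_o+p)\,\ell(x)$---in particular, tracking which cross-terms survive under $P=\sigma_p^{2}I_o$ and correctly recovering the $\ell(\mu^{-})\ell(\mu^{-})^{T}$ contribution that distinguishes $S_{G_k}$ from the ordinary EKF innovation covariance $S_k$ in \eqref{eq:ekf_update}. This computation is essentially the derivation of the innovation covariance in \cite{GEKF}; an equally valid route is to invoke that derivation directly---observing that $S_{G_k}$ is, by construction, the covariance of $z_k$ under the linearized measurement model---after which the claim follows immediately from bilinearity of covariance applied to $\theta_G=K_{G_k}(z_k-\hat z_k)$.
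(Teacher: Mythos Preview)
Your proposal is correct, and the short route you sketch in your final sentence is exactly what the paper does: the paper's entire proof is a one-line appeal to bilinearity of covariance together with the fact that $S_{G_k}$ is, by construction in \cite{GEKF}, the covariance of the innovation $z_k-\hat z_k$, so that $\Lambda_G=K_{G_k}S_{G_k}K_{G_k}^{T}$ follows immediately.

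The difference is that the bulk of your proposal does not take $S_{G_k}=\mathrm{Cov}(z_k)$ as given but instead \emph{rederives} it from the measurement model \eqref{eq:state_dependent_measurement_model} via the entrywise product-variance computation for $(\mathbf{1}_o+p)\,\ell(x)$ under $P=\sigma_p^{2}I_o$. This is more work than the paper performs, but it buys a self-contained argument that makes transparent exactly where the $\sigma_p^{2}\,\mathrm{diag}\{H_k\Sigma^{-}H_k^{T}+\ell(\mu^{-})\ell(\mu^{-})^{T}\}$ term---the piece distinguishing $S_{G_k}$ from the EKF's $S_k$---comes from. The paper's version is terser and simply defers that derivation to the GEKF reference; your version reproduces it in place. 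Both are valid; yours is more informative.
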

\begin{proof}
Since the innovation term $\theta_G$ is defined as the product of the linear Kalman gain $K_{G_k}$ with the innovation $(z_k - \hat{z_k})$, the covariance of the innovation term is obtained by propagation of the Kalman gain through the innovation covariance $S_{G_k}$.
\end{proof}

\section{Experimental Results}
\label{sec:results}
To demonstrate the efficacy of our proposed approach, we employ a BCBF with the GEKF under noise model \eqref{eq:state_dependent_measurement_model} in two simulated scenarios: (i) a nonlinear single-integrator system tasked with reaching a desired setpoint while avoiding a half-space safety constraint, and (ii) a 2D unicycle kinematic system tracking a sinusoidal trajectory, with safety constraints imposed at the trajectory extremities. We use the BCBF-EKF formulation in \cite{belief_cbf} as a baseline for our results. In both scenarios, the time update \eqref{eq:state_estimator_time_update} is implemented at 1000 Hz. We use a ZoH framework \cite{zoh_signal_processing} to obtain measurements at 0.1 Hz. This exaggerates the effects of discrete, sporadic updates to the belief of the system during the measurements, which allows us to highlight the performance of our approach against the BCBF-EKF baseline. 

\subsection{Setpoint tracking for a single integrator system}
Consider a one-dimensional nonlinear single-integrator system with dynamics and measurement model given by
\begin{align*}
\dot{x} &= 0.1 \cos(x) + u, \\
z &= (1 + p)\ell(x) + v,
\end{align*}
where $\ell(x) = Ix$ is the identity observation function. The multiplicative noise $p$ and additive noise $v$ are characterized by $\mu_p = 0.1, P = 0.001^2I_1$ and $\mu_v = 0.01, R= 0.0005^2I_1$, respectively. This allows the multiplicative noise $p$ to dominate, facilitating the demonstration of how it is handled by GEKF vs EKF. The system is driven to a desired state of \( x = 6.0 \) by a Control Lyapunov Function (CLF) based nominal control law $V(x) = (x - 6.0)^2$.  We constrain the belief of the system to $\mathcal{C}_b = \left\{ b \in \mathbb{R} \,\middle|\, \alpha^T x \geq \beta \right\}$, where $\alpha=-1$ and $\beta=-5.0$. This belief safe set represents the half space constraint $x \leq 5.0$. The optimal control input that obeys this CLF law while observing a BCBF constraint is determined using the following CLF-CBF-QP:
\begin{align*}
u(x) = \argmin_{(u, \rho) \in \mathbb{R}^{m+1}} 
\quad & \frac{1}{2} u^T H u + s \rho^2  \\
\text{s.t.} \quad 
& L_f V(x) + L_g V(x) u \leq -V(x) + \rho, \\
& L_f h_b(b) + L_g h_b(b) u \geq -h_b(b),\\
& -u_{max} \leq u \leq u_{max},
\end{align*}
where
\begin{equation*}
H=
\begin{bmatrix}
1 & 0 \\
0 & 2s
\end{bmatrix},
\end{equation*}
$s = 10.0$ is the slack penalty, and $\rho$ is the slack variable that is optimized to relax the CLF requirement for prioritizing safety. The maximum absolute control value is set to $u_{max}=1$. We set the desired CVaR risk level to \( \delta = 0.001 \).

Table \ref{tab:gekf_ekf_metrics} summarizes the performance comparison between the BCBF-GEKF and BCBF-EKF controller over the course of 100 Monte Carlo simulation runs. Due to higher tracking accuracy, the BCBF-GEKF controller is less conservative as compared to the BCBF-EKF. The standard EKF fails to account for the bias induced from the multiplicative and additive noise. As a result, it fails to converge to the true trajectory of the system and consistently overestimates the true position, leading to repeated violations near the safety boundary with respect to the estimated trajectory (Fig.~\ref{fig:1d_traj}).

\begin{figure}
    \centering
    \includegraphics[width=1\linewidth]{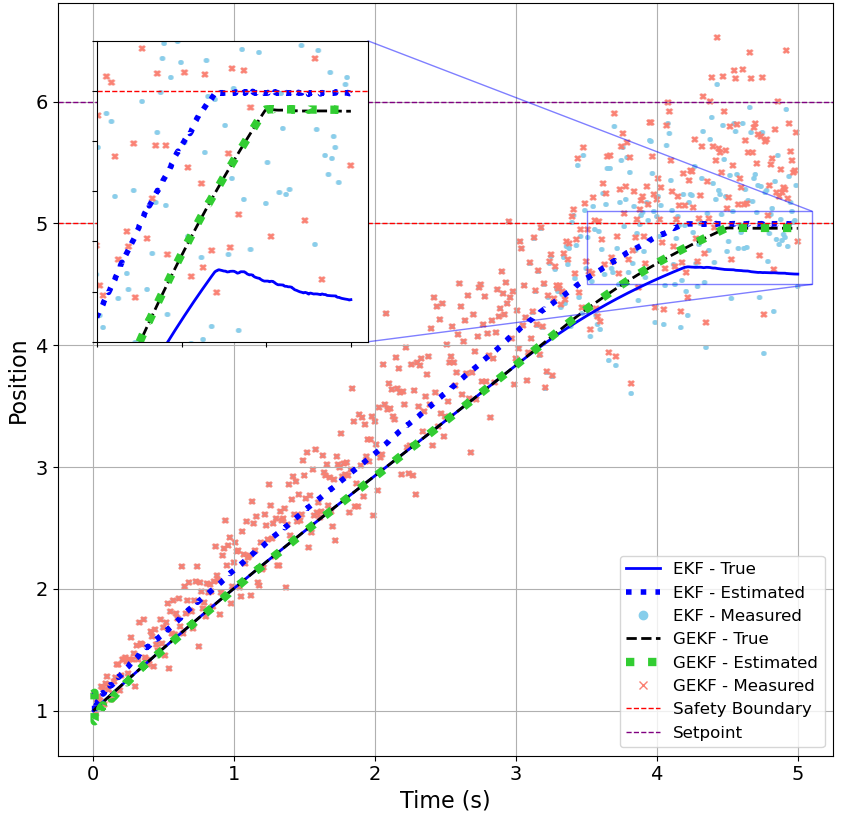}
    \caption{Estimated and true trajectories for the 1D nonlinear single-integrator setpoint tracking task under BCBF-GEKF and BCBF-EKF control. The inset highlights that the EKF estimate crosses the safety boundary due to large estimation error, even though the true system trajectory remains safe. In contrast, the GEKF estimate respects the safety boundary while enabling the true trajectory to operate closer to it.}
    \label{fig:1d_traj}
\end{figure}
Conversely, BCBF-GEKF achieves a slightly lower estimated distance from the safety boundary while allowing the actual trajectory of the system to get closer to the desired setpoint. This characteristic of the BCBF-GEKF can be exploited to allow a system with multiplicative state-dependent noise to navigate more constrained environments.

% New Settings
% u_max = 1.0
% mu_u = 0.1
% sigma_u = jnp.sqrt(0.01)
% mu_v = 0.01
% sigma_v = jnp.sqrt(0.0005)

% clf_gain = 1.0  # CLF linear gain
% clf_slack_penalty = 10.0
% cbf_gain = 5.0 # CBF linear gain

% GEKF Results
% Average Metrics over 99 simulations:
% Number of estimate exceedances: 0.0
% Number of true exceedances: 0.0
% Max estimate value: 4.9584575
% Max true value: 4.9583554
% Mean est distance from obstacle: 1.6933713
% Average controller effort: 62.572735
% Cumulative distance to goal: 8474.95
% GEKF Tracking RMSE: 0.01312644

% EKF Results
% Number of estimate exceedances: 165.35
% Number of true exceedances: 0.0
% Max estimate value: 5.004062
% Max true value: 4.6356816
% Mean est distance from obstacle: 1.5235401
% Average controller effort: 60.221542
% Cumulative distance to goal: 8789.627
% EKF Tracking RMSE: 0.24841215

\begin{table} % [h!]
    \centering
    \begin{tabular}{l c c}
        \hline
        \textbf{Metric} & \textbf{BCBF-GEKF} & \textbf{BCBF-EKF} \\
        \hline
        \% estimated exceedances per run & 0.000 & 3.307 \\
        \% true exceedances per run & 0.000 & 0.000 \\
        Max estimated position value & 4.958 & 5.004 \\
        Max true position value & 4.958 & 4.636 \\
        Mean est distance from boundary & 1.693 & 1.524 \\
        Average controller effort & 62.573 & 60.222 \\
        Tracking RMSE & 0.013 & 0.248 \\
        \hline
    \end{tabular}
    \caption{Comparison of BCBF-GEKF and BCBF-EKF performance metrics averaged over 100 simulations.}
    \label{tab:gekf_ekf_metrics}
\end{table}

\subsection{2D Sinusoidal Trajectory Tracking}
\subsubsection{System Model}
Consider a unicycle vehicle model $x = [x,\,y,\,v,\,\theta]^T$ subject to zero-mean Gaussian process noise  $w \sim \mathcal{N}(0, Q)$, where $(x,y)$ denotes position in the plane, $v$ is the longitudinal velocity, and $\theta$ is the heading angle. The system is affine in control input $u$ such that
\begin{equation}
    \dot{x} = f(x) + g(x)u + w,
    \label{eq:dubins_affine}
\end{equation}
where
\[
f(x) =
\begin{bmatrix}
v \cos \theta \\
v \sin \theta \\
0 \\
0
\end{bmatrix},
\qquad
g(x) =
\begin{bmatrix}
0 & 0 \\
0 & 0 \\
1 & 0 \\
0 & 1
\end{bmatrix},
\qquad
u =
\begin{bmatrix}
a \\ \omega
\end{bmatrix}.
\]

Here $a$ and $\omega$ are the longitudinal acceleration and yaw-rate respectively. The observation model is given by
\begin{equation}
    z(x) = \bigl[1 + p_n\bigr]\,\ell_n + v_n.
    \label{eq:obs_model}
\end{equation}
where $\ell$ is the observation function that maps the state to its y component ($\ell(x, y, v, \theta) = y$). We set $Q = 0.0001I_4$, $\mu_p = 0.1$, $P = 0.01^2I_1$, $\mu_v = 0.001$ and $R = 0.0005^2I_1$.

\subsubsection{Target trajectory}
The target trajectory $x_d$ is generated based on a sinusoidal function parameterized by time

\begin{equation}
x_d\;=\;
\begin{bmatrix}
x_d(t) \\[4pt]
y_d(t) \\[4pt]
v_d(t) \\[4pt]
\theta_d(t)
\end{bmatrix}
=
\begin{bmatrix}
v_rt \\[4pt]
A \sin(\omega t) + A \\[4pt]
v_r \\[4pt]
\arctan\!\big(\tfrac{y_d}{x_d}\big)
\end{bmatrix},
\end{equation}
where $A=1.0$ is the amplitude of the wave, $\omega=0.5$ is the phase, and $v_r=1.0$ is the reference longitudinal velocity. 

\subsubsection{Gain-scheduled trajectory tracking controller}

The nominal control input $u_{nom}$ is generated by a gain-scheduled feedback controller \cite{gain_scheduling} obtained by linearizing the unicycle vehicle model around a straight-line trajectory with reference 
speed $v_r$ and zero heading. The state vector is defined as 
$x = [x,\,y,\,v,\,\theta]^T$, while the desired trajectory is 
$x_d = [x_d,\,y_d,\,v_d,\,\theta_d]^T$. The error coordinates are 
$e_x = x - x_d$, $e_v = v - v_r$, $e_y = y - y_d$, and 
$e_\theta = \theta - \theta_d$. The controller outputs longitudinal 
acceleration $a$ and the yaw rate $\omega$ according to 
\[
u_{nom} = u_d - K e,
\]
with the desired control input $u_d = [0,\,0]^T$. The feedback gain matrix is 
\[
K = \begin{bmatrix}
k_x & k_v & 0 & 0 \\
0 & 0 & k_y & k_\theta
\end{bmatrix},
\]
where the gains are scheduled as $k_x = \lambda_1$, $k_y = \tfrac{a_1}{v_r}$, 
and $k_\theta = a_2$. Here $\lambda_1=1.0,\,k_v=1.0,\,a_1=16.0$ and $\,a_2=100.0$ are design parameters 
that determine the closed-loop error dynamics. This results in a nominal 
controller of the form
\[
u_{\text{nom}} =
\begin{bmatrix}
    a \\ 
    \omega
\end{bmatrix}
=
\begin{bmatrix}
    -k_x e_x - k_v e_v \\ 
    -k_y e_y - k_\theta e_\theta
\end{bmatrix},
\]
where $a$ denotes the commanded longitudinal acceleration and $\omega$ the commanded yaw rate.  

\subsubsection{Safe Control}
To enforce safety while tracking the nominal trajectory, we impose the following half-space 
constraints at the peaks of the sinusoidal wave:
\begin{align*}
    \Pr\!\left[ \alpha_i^T x\;\geq\; \beta_i \right] &\;\geq\; 1 - \delta,
    \quad i = 1,2, \\[6pt]
    \Longleftrightarrow\;\; &\text{CVaR}_{\delta}\!\big(\alpha^T x - \beta \big) \geq 0 \implies h_{b_i} \geq 0\\
     \alpha_1^T &= \begin{bmatrix} 0 & -1 & 0 & 0 \end{bmatrix}, 
    \beta_1 = -5, \\[6pt]
    \alpha_2^T &= \begin{bmatrix} 0 & 1 & 0 & 0 \end{bmatrix}, 
    \beta_2 = -5.
\end{align*}
Here $h_{b_1}$ and $h_{b_2}$ enforce safety at the top and bottom peaks of the sinusoidal wave respectively (Fig. \ref{fig:2d_traj}). We set the CVaR risk level to $\delta=0.001$. The half-space constraints have a relative degree of 2 with respect to the unicycle model \cite{ECBFs}. Therefore, safety is enforced by respecting the second order constraints of the form:
\begin{align}
    L_g L_f h_{b_i}\, u \;\geq\;
    &  k\cdot\begin{bmatrix} \zeta_1 & \zeta_2 \end{bmatrix} \begin{bmatrix} L_f h_{b_i} \\[4pt] h_{b_i}(x) \end{bmatrix} - L_f^2 h_{b_i}
    \label{eq:cbf_rel_constraint}
\end{align}
where $\zeta_1=1.0$ and $\zeta_2=0.75$. $k=50.0$ is a linear gain which weights the safety constraints in an optimization process. This allows us to solve for the optimal control input that tracks the nominal trajectory while respecting the safety constraints:
\begin{equation}
\label{eq:qp}
\begin{aligned}
    u_{opt}(x) &= \argmin_{(u) \in \mathbb{R}^{m+1}} 
    \quad \frac{1}{2} (u - u_{nom})^T(u - u_{nom})\\
    \text{s.t.} \quad 
        & \text{\eqref{eq:cbf_rel_constraint} holds for $h_{b_1}, h_{b_2}$}, \\
        & -u_{\max} \leq u \leq u_{\max}.
\end{aligned}
\end{equation}
The system is initialized at $x_0 = [0, 0, 5.0, 0.45]$, with the estimator belief given by a noisy measurement of $x_0$ (see~\eqref{eq:state_dependent_measurement_model}) and covariance $\Sigma_0 = \operatorname{diag}([0.1, 0.1, 0.1, 0.1])$. Control inputs are bounded by $u_{\max} = [1, 1]^T$, and the quadratic program~\eqref{eq:qp} is solved using the JAXOPT BoxOSQP solver~\cite{jax, jaxopt}. Table~\ref{tab:gekf_ekf_comparison} reports mean metrics over 100 Monte Carlo runs of the 2D sinusoidal tracking task.  

The GEKF, by accounting for multiplicative noise, achieves an order-of-magnitude improvement in tracking accuracy over the EKF, while maintaining zero safety violations. Similar to the 1D case, GEKF allows the true trajectory to track the nominal sinusoid more closely, yielding a less conservative controller. In contrast, the EKF shows lower average and final covariance traces due to overconfidence arising from a larger Kalman gain that favors measurements more strongly. In addition to violating the top safety constraint (BCBF~1), about $14\%$ of BCBF-EKF runs produced failed trajectories, Failures typically occurred due to large estimation errors in $x$ (initially) or both $x$ and $y$ (after the noisiest $y$-measurement). In all failed cases, the system was unable to recover and reenter the safety set under BCBF-EKF control.

\begin{figure} % [h!]
    \centering
    \includegraphics[width=1\linewidth]{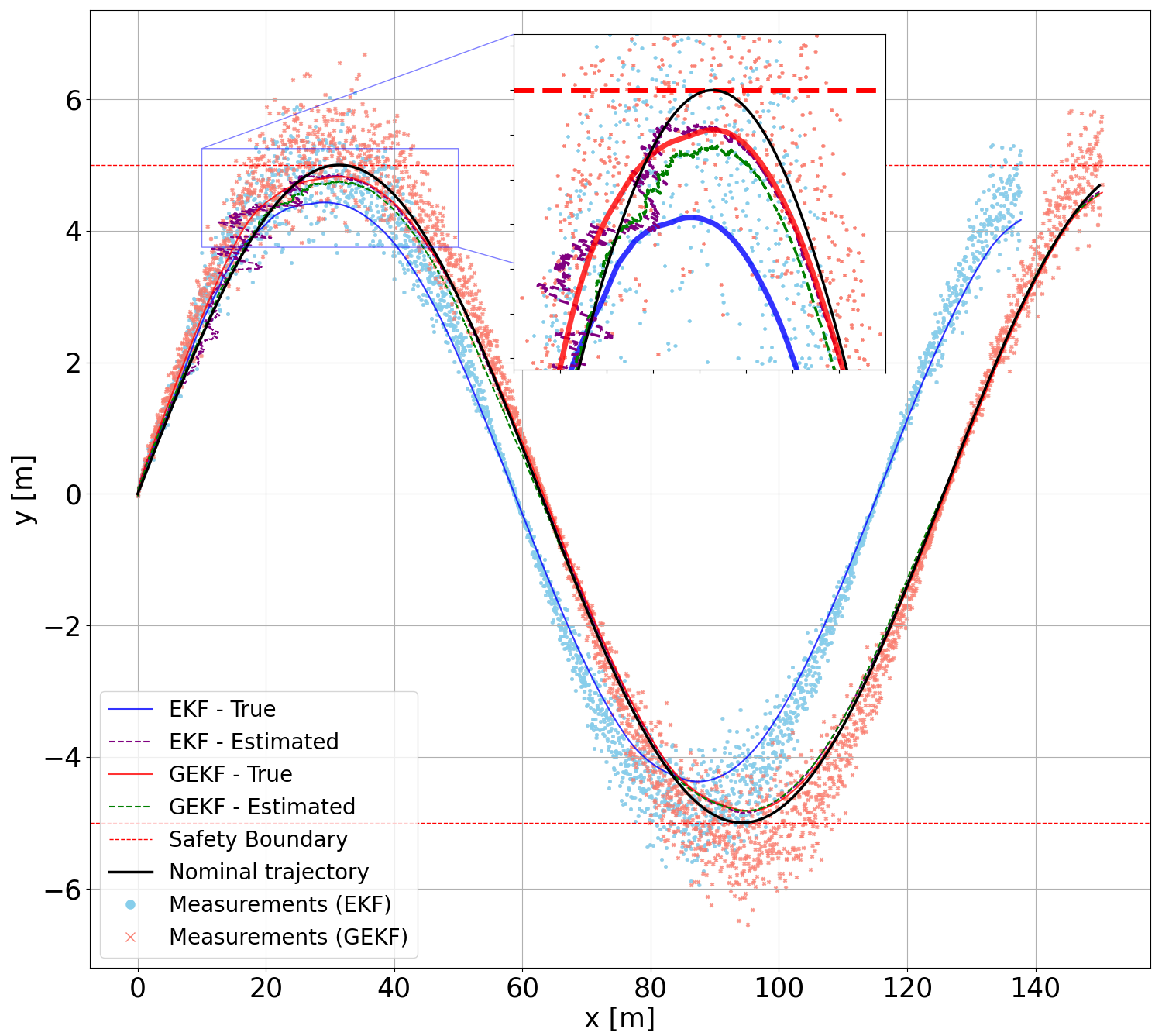}
    \caption{Comparison of trajectories, measurements, and estimates for sinusoidal tracking with BCBF-GEKF vs. BCBF-EKF. GEKF achieves closer nominal tracking while preserving safety comparable to the EKF.}
    \label{fig:2d_traj}
\end{figure}

\begin{figure} % [h!]
    \centering
    \includegraphics[width=1\linewidth]{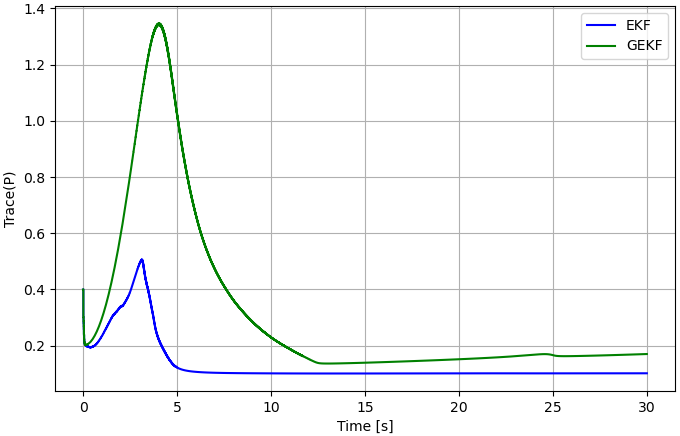}
    \caption{Evolution of trace of covariance under EKF and GEKF. Lower covariance trace of the EKF can be attributed to overconfidence in measurements due to a higher Kalman gain as compared to the GEKF.}
    \label{fig:2d_cov}
\end{figure}

\begin{table} % [h!]
\centering
\caption{Comparison of performance metrics across GEKF and EKF.}
\begin{tabular}{l|c|c}
\hline
\textbf{Metric} & \textbf{GEKF} & \textbf{EKF} \\
\hline
Max true $y$ value              & 4.757 & 4.438 \\
Min true $y$ value              & -4.819 & -4.388 \\
Max estimated $y$ value         & 4.757 & 4.897 \\
Min estimated $y$ value         & -4.823 & -4.846 \\
Tracking RMSE                   & 0.341 & 3.573 \\
Average acceleration            & 0.001 & -0.015 \\
Average yaw rate                & -0.013 & -0.013 \\
\% of BCBF 1 violations per run     & 0.000 & 0.025 \\
\% of BCBF 2 violations per run  & 0.000 & 0.000 \\
% NEES coverage (\%)              & 6.489 & 0.449 \\
Mean covariance trace           & 0.301 & 0.128 \\
Final covariance trace          & 0.165 & 0.100 \\
Failure rate (\%)               & 0.000 & 13.821 \\
% Mean probability of leaving     & 0.452 & 0.443 \\
% Max probability of leaving      & 0.486 & 0.472 \\
\hline
\end{tabular}
\label{tab:gekf_ekf_comparison}
\end{table}

\section{Conclusion and Future Work}\label{sec:conclusion}

In this work, we identified a sensor and state estimation model that captures state-dependent measurement noise exhibited by complex sensor modalities such as vision-based LECs. We deployed belief control barrier functions with a coherent risk measure to capture the expected worst-case cost of probabilistic safety constraint violations. Through simulations, we demonstrated that our approach yields less conservative control and higher tracking accuracy, as compared to belief-space estimation and control with a fixed additive noise model. We hope this establishes a foundation for principled modeling of complex sensor noise in safe control frameworks. In future, we aim to investigate sensor–estimator pairs with richer noise models to broaden the applicability of this approach, and validate the accuracy of these models in photorealistic simulation and hardware experiments.

\section{Acknowledgments}

\thanks{The NASA University Leadership initiative (grant \#80NSSC20M0163) provided funds to assist the authors with their research, but this article solely reflects the opinions and conclusions of its authors and not any NASA entity.}% <-this % stops a space

% ---- Bibliography ----

\newpage

\bibliographystyle{IEEEtran}
\bibliography{bibliography}

\end{document}